\begin{document}

\definecolor{value}{HTML}{776e65}
\definecolor{twocolor}{HTML}{eee4da}
\definecolor{fourcolor}{HTML}{ede0c8}
\definecolor{val2048}{HTML}{F9F6F2}
\definecolor{bg8}{HTML}{F2b179}
\definecolor{bg16}{HTML}{F59563}
\definecolor{bg32}{HTML}{F67C5F}
\definecolor{bg64}{HTML}{F65e3b}
\definecolor{bg128}{HTML}{EDCF72}
\definecolor{bg256}{HTML}{edcc61}
\definecolor{bg512}{HTML}{edc850}
\definecolor{bg1024}{HTML}{EDC53F}
\definecolor{bg2048}{HTML}{edc22e}
\definecolor{bgboard}{HTML}{bbada0}
\newcommand*{\boxeddigit}[1]{\fbox{\sf #1}\hspace{1mm}}
\newcommand{\two}{\fcolorbox{bgboard}{twocolor}{\sf \textcolor{value}{\fontseries{bx}2}}\hspace{1mm}}
\newcommand{\four}{\fcolorbox{bgboard}{fourcolor}{\sf \textcolor{value}{\fontseries{bx}4}}\hspace{1mm}}
\newcommand{\tileVIII}{\fcolorbox{bgboard}{bg8}{\sf \textcolor{val2048}{\fontseries{bx}8}}\hspace{1mm}}
\newcommand{\tileXVI}{\fcolorbox{bgboard}{bg16}{\sf \textcolor{val2048}{\fontseries{bx}16}}\hspace{1mm}}
\newcommand{\tileXXXII}{\fcolorbox{bgboard}{bg32}{\sf \textcolor{val2048}{\fontseries{bx}32}}\hspace{1mm}}
\newcommand{\tileLXIV}{\fcolorbox{bgboard}{bg64}{\sf \textcolor{val2048}{\fontseries{bx}64}}\hspace{1mm}}
\newcommand{\tileCXXVIII}{\fcolorbox{bgboard}{bg128}{\sf \textcolor{val2048}{\fontseries{bx}128}}\hspace{1mm}}
\newcommand{\tileCCLVI}{\fcolorbox{bgboard}{bg256}{\sf \textcolor{val2048}{\fontseries{bx}256}}\hspace{1mm}}
\newcommand{\tileDXII}{\fcolorbox{bgboard}{bg512}{\sf \textcolor{val2048}{\fontseries{bx}512}}\hspace{1mm}}
\newcommand{\tileMXXIV}{\fcolorbox{bgboard}{bg1024}{\sf \textcolor{val2048}{\fontseries{bx}1024}}\hspace{1mm}}
\newcommand{\tileMMXLVIII}{\fcolorbox{bgboard}{bg2048}{\sf \textcolor{val2048}{\fontseries{bx}2048}}\hspace{1mm}}
\newcommand{\Xtile}{\fcolorbox{bgboard}{bg2048}{\sf \textcolor{val2048}{\fontseries{bx}\X}}\hspace{1mm}}

\newcommand{\onethrees}{\fcolorbox{bgboard}{cyan}{\sf \textcolor{value}{\fontseries{bx}1}}\hspace{1mm}}
\newcommand{\twothrees}{\fcolorbox{bgboard}{red}{\sf \textcolor{value}{\fontseries{bx}2}}\hspace{1mm}}
\newcommand{\threethrees}{\fcolorbox{bgboard}{twocolor}{\sf \textcolor{value}{\fontseries{bx}3}}\hspace{1mm}}
\newcommand{\sixthrees}{\fcolorbox{bgboard}{twocolor}{\sf \textcolor{value}{\fontseries{bx}6}}\hspace{1mm}}
\newcommand{\twelvethrees}{\fcolorbox{bgboard}{twocolor}{\sf \textcolor{value}{\fontseries{bx}12}}\hspace{1mm}}

\newtheorem{observation}[theorem]{Observation}
\newtheorem{fact}[theorem]{Fact}

\def\X{T}
\def\NP{{\sf NP}}
\def\FPT{{\sf FPT}}

\title{Threes!, Fives, 1024!, and 2048 are Hard}
\author{%
   Stefan Langerman\thanks{Directeur de Recherche du F.R.S.-FNRS}\inst{1} \and %
   Yushi Uno\inst{2}
}

\institute{
    D\'{e}partement d'informatique, 
    Universit\'{e} Libre de Bruxelles, 
    ULB CP 212, avenue F.D. Roosevelt 50, 1050 Bruxelles, Belgium. 
    \email{stefan.langerman@ulb.ac.be}
\and
    Department of Mathematics and Information Sciences, 
    Graduate School of Science, Osaka Prefecture University, 
    1-1 Gakuen-cho, Naka-ku, Sakai 599-8531, Japan. 
    \email{uno@mi.s.osakafu-u.ac.jp}
}

\maketitle

\begin{abstract}
We analyze the computational complexity of the popular computer games
{Threes!}, {1024!}, {2048} and many of their variants. 
For most known versions expanded to an $m\times n$ board, we show that
it is \NP-hard to decide whether a given starting position can be
played to reach a specific (constant) tile value. 
\end{abstract}

\section{Introduction}

\begin{figure}[hbt]
\centering
\begin{minipage}{.30\linewidth}
\centering
\scalebox{0.16}{\includegraphics{./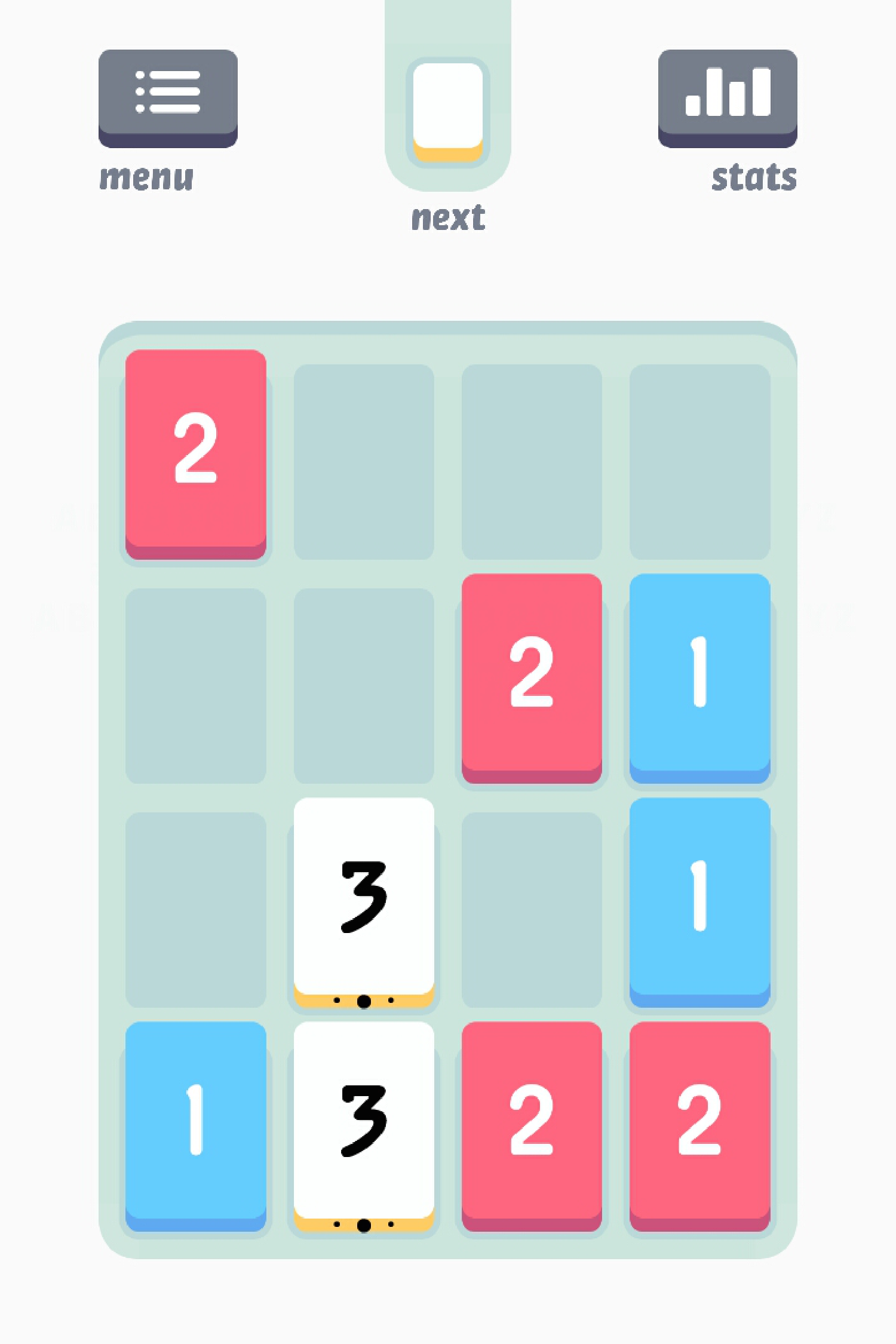}}
\caption{Threes!}
\label{threes}
\end{minipage}
\begin{minipage}{.03\linewidth}
\mbox{}
\end{minipage}
\begin{minipage}{.30\linewidth}
\centering
\scalebox{0.39}{\includegraphics{./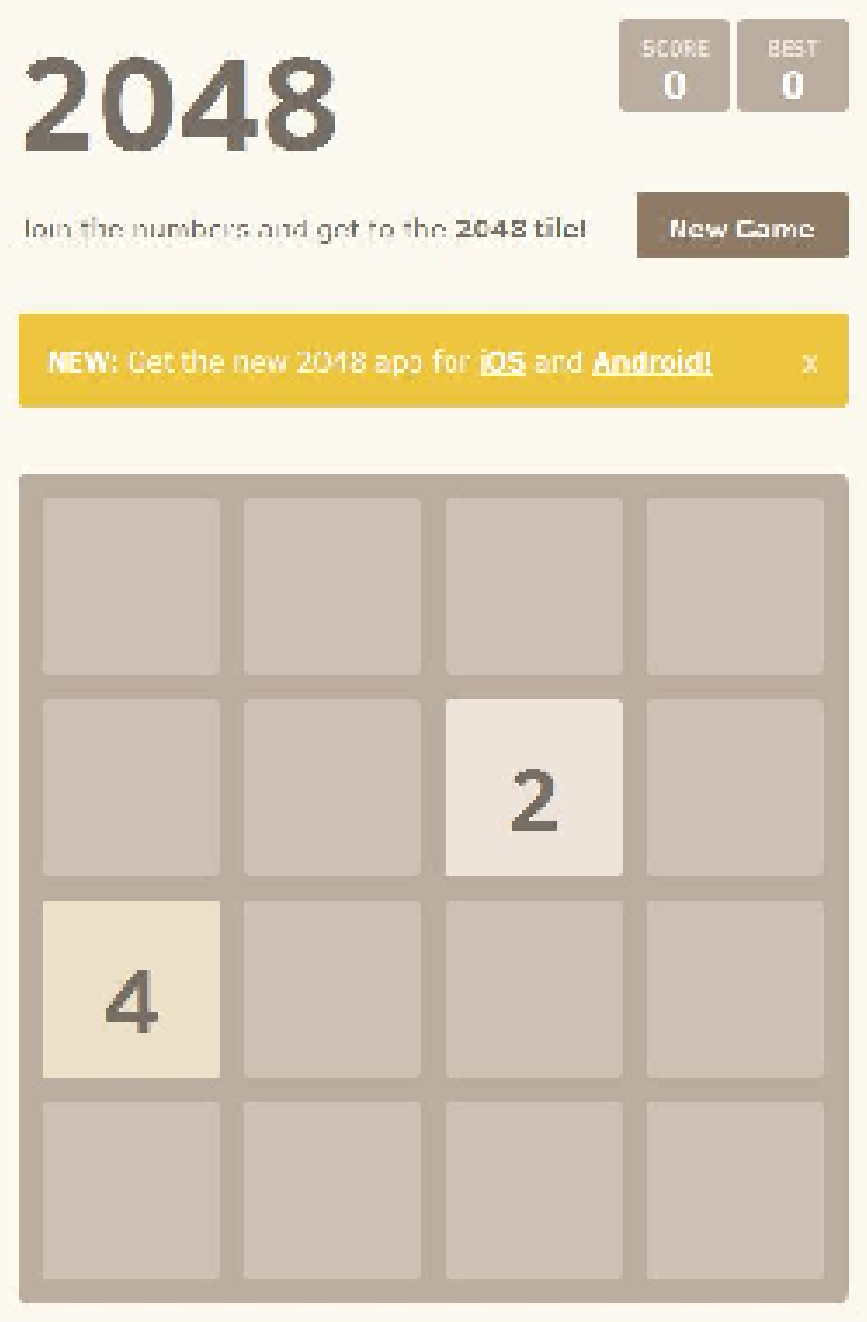}}
\caption{The game 2048: a board and one of its initial configuration.}
\label{2048_original_initial}
\end{minipage}
\begin{minipage}{.03\linewidth}
\mbox{}
\end{minipage}
\begin{minipage}{.30\linewidth}
\centering
\scalebox{0.40}{\includegraphics{./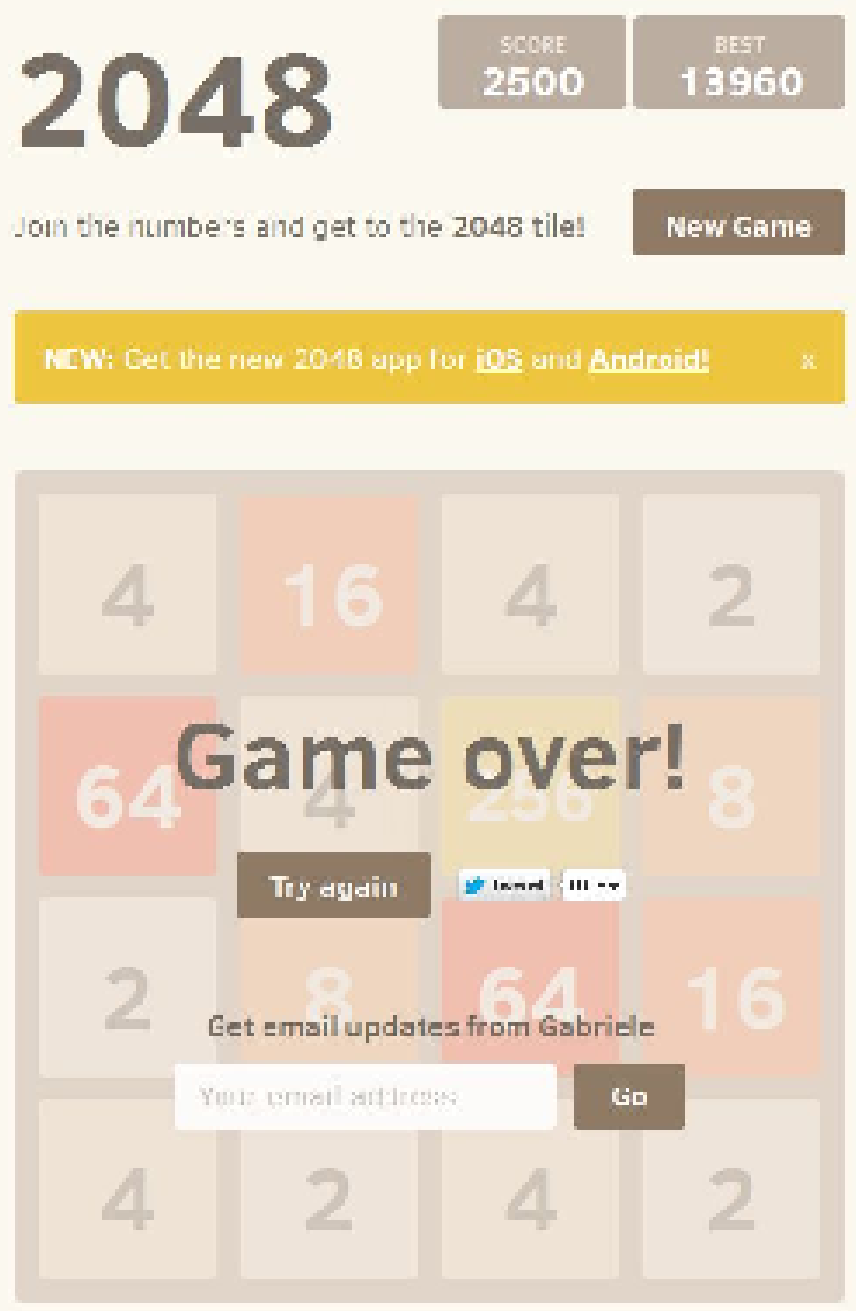}}
\caption{A forbidden (game over) configuration.}
\label{2048_gameover}
\end{minipage}
\end{figure}


\emph{Threes!}~\cite{Threes} is a popular puzzle game created by Asher Vollmer, Greg
Wohlwend, and Jimmy Hinson (music), and released by Sirvo for iOS on
January 23, 2014. The game received considerable attention from players,
game critics and game designers. Only a few weeks after its release,
an Android clone \emph{Fives} appeared, and then an iOS clone
\emph{1024!} with slightly modified rules. Shortly after, two open
source web game versions, both called \emph{2048} were released on
github on the same day, one by Saming~\cite{2048saming}, the other by
Gabriele Cirulli~\cite{2048cirulli}. Since then over a hundred new variant have
been catalogued~\cite{2048variants}. In December 2014, Threes! received the Apple Game of the
Year and the Apple Design award.

\begin{description}
\item[2048] 
We first describe Cirulli's 2048~\cite{2048cirulli} (or just 2048 for short) which has a slightly simpler set of
rules (see Fig.~\ref{2048_original_initial}). 
The game is played on a $4\times 4$ square grid {\it board},
consisting of 16 {\it cells}. 
During the game, each cell is either empty or contains a \emph{tile}
bearing a \emph{value} which is a power of two.
When the game begins, a (random) starting \emph{configuration} of
tiles is placed on the board. 
Then, in every turn, 
one plays a move by indicating one of four directions, 
{\rm up}, {\rm down}, {\rm left} or {\rm right}, 
and then each numbered tile moves in that direction, either to the
boundary of the board (a \emph{wall}) or until it hits another tile. 
When two tiles of value $K$ hit, they merge to become a single tile of
value $2K$. 
If three or more tiles with the same value hit, 
they merge two by two, starting with the two closest to the wall in
the direction of the move. 
If no tile can move in some direction (e.g., all tiles touch the
wall), then that move is \emph{invalid}.
After each turn, a new tile of value \two or \four 
appears in a random empty cell.
The objective of the game is to make a tile of value 2048, and/or to
maximize the \emph{score} defined as the sum of all new tiles created
by merges during the game. 
If during the game, the board is completely filled and
no move is valid, then
the game is over and the player loses. 
We call such a configuration {\it forbidden} (Fig.~\ref{2048_gameover}). 

\begin{figure}[hbt]
\centering
\begin{minipage}{.30\linewidth}
\centering
\scalebox{0.40}{\includegraphics{./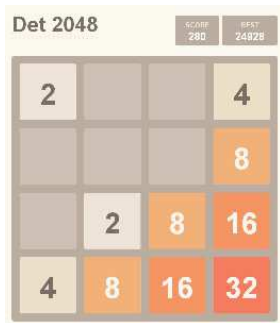}}
\caption{Deterministic 2048.}
\label{2048det}
\end{minipage}
\end{figure}

\item[Threes!] The tiles in Threes! have values $1$, $2$, and 
   $3 \cdot 2^i, i\geq 0$. The tiles $1$ and $2$ combine to form tile $3$, and
   tiles of value $K\geq 3$ combine to form tile $2K$. Another
   important difference is that when performing a move, all tiles move
   in the corresponding direction by at most one cell instead of moving as far as
   possible. For example during a {\rm left} move, a tile next to the left wall (if any) doesn't move (we say
   it is \emph{blocked}). Then looking at the tiles from left to right,
   tiles immediately to the right of one that is blocked will either be blocked as well or will move
   left to merge with the blocked one if they can combine. A tile next
   to one that moves or next to an empty space will move one cell to
   the left. New tiles appear according to an unknown rule, which
   seems to change depending on the version of the game. It seems to
   be always of low value and on a cell next to a wall. At the end of
   the game, the score is computed by totalling $3^{i+1}$ points for each
   tile of value $3 \cdot 2^i$ on the board.
\item[Fives] was the first clone of Threes! for Android devices. Its
  rules are nearly identical to Threes!, except that the base tiles
  have values $2$ and $3$ which combine to form tile $5$, all other
  tiles have values $5 \cdot 2^i, i\geq 0$.
\item[1024!] The main difference with 2048 is that there is a
  fixed block in the middle of the board that doesn't move during the
  game. A new tile appears after each turn at a random location on the
  board (not necessarily next to a wall). The goal is to reach 1024.
\item[Saming's 2048] Just as in Cirulli's 2048, tiles are powers of
  two, however the tiles move according to slightly different
  rules. During a {\rm left} move for example, tiles are considered in
  each row from right to left. A tile $t$ will only move if its
  left neighboring cell is empty or of identical value. If there is no
  tile left of $t$, then $t$ is moved to a cell adjacent to the left
  wall. Otherwise let $s$ be the next tile left of $t$.
  If $t$ and $s$ are of identical values, the two
  cells are merged (and the value doubled), and the merged tile does
  not move this turn. Otherwise, the tile $t$ stops just to the right of
  $s$. A new tile of value 2 or 4 is inserted in a random empty cell
  at the end of the move.
\item[Det2048] This version is identical to 2048 except that 
a new tile always appears 
in the first empty cell (leftmost, then topmost) and its value is always \two. 
In its initial configuration, 
only a single tile \two is placed in the upper left cell
(Fig.~\ref{2048det}). 
\item[Fibonacci] In this popular version, the tiles have values from
  the Fibonacci sequence, and only tiles of successive values in the
  sequence are combined.
\end{description}
Other than those, the most natural variants use larger boards, and set higher
goal tile values.

The goal of this paper is to determine the computational complexity of
Threes!, 2048 and many of their variants. We follow the usual
offline deterministic model introduced by Demaine et al. for the
videogame Tetris~\cite{Tetris_IJCGA}. Given an initial configuration of tiles
in an $m\times n$ board, we assume the player has full knowledge of the
new pieces that will be added to the board after each
move\footnote{We ignore for now the issue of representing the position of
the new tiles, which might depend on which cells of the board are
empty, and thus on previous moves. As we will see, 
this will have little influence on the main results.}.
We prove that even with offline deterministic knowledge (and in all
variants listed above), it is \NP-hard to optimize several
natural objectives of the game:
\begin{itemize}
\item maximizing the largest tile created ({\sc Max-Tile}),
\item maximizing the total score ({\sc Max-Score}), and
\item maximizing the number of moves before losing the game ({\sc Max-Move}).
\end{itemize}

We show in fact that all three problems are inapproximable.
The decision problem for {\sc Max-Tile} is already \NP-hard for a
constant tile value, where the constant depends on the variant of the
game considered. On the other hand, {\sc Max-Move} is clearly
fixed-parameter tractable (\FPT)~\cite{downey1999parameterized}, that is, determining if
$k$ moves 
can be performed without losing takes only $O(4^k mn)$ time (since
there are only 4 moves possible at every step). 
Likewise, every merge increases the score by at least 4, and so the
number of moves to achieve score $x$ is at most $x/4$, therefore
determining if score $x$ can be achieved takes only $O(4^{x/4} mn)$
and {\sc Max-Score} is also \FPT.

\paragraph{\bf Related works.}
The tractability of computer games falls under the larger field of 
\emph{Algorithmic Combinatorial Game Theory} which has received
considerable interest over the past decade. See Demaine and
Hearn~\cite{AlgGameTheory_GONC3} for a recent survey. Block pushing and sliding
puzzles are probably the most similar to the games studied here, a
notable difference being that here (i) new tiles are (randomly) inserted after
each move and (ii) a merging mechanism reduces the number of tiles at
each step. Without these differences, the game would be nearly
identical to the \emph{Fifteen puzzle} and its generalizations, which
interestingly can be solved very efficiently.

In a blog post~\cite{2048inNP}, Christopher Chen proved that 2048 is in \NP, 
but for a variant where no new tile is inserted after each move.
More recently, two articles have appeared on arXiv with the aim of
analyzing the complexity of  Threes! and 2048. The first one~\cite{1408.6315} notes
2048 is \FPT (as discussed above) and claims {\sf PSPACE}-hardness of 
2048 by reduction from Nondeterministic Constraint Logic. 
However Abdelkader et al.~\cite{1501.03837} noted several issues with
that reduction. For instance, in order for their gadgets to function
properly, they need to modify the way tiles are moved and inserted
during the game. In particular, they allow tiles to be inserted by the
game at specific places in the middle of rows and columns in order to
maintain an invariant base pattern. Furthermore, the goal tile value
in the reduction is a (rather large) function of the input size. 
In an attempt to resolve these issues, Abdelkader et al.~\cite{1501.03837} analyze
the variant studied by Chen, in which no new tiles are generated during the game. For that
case, they show that for both Threes! and 2048, it is \NP-complete to
decide if a specific (constant) tile value can be reached. 

In the present paper, we analyze  Threes! and 2048, where new tiles
appear and tiles move and merge exactly as they do in the original
games. Our proofs are easily extended to most existing variants of the
game. We also prove inapproximability results for all these games, and
show most of them are in \NP.

\section{Definitions}

The simplest version of these games to describe is probably Det2048,
where after each move a tile \two appears in the first empty cell of
the board (in lexicographic order, leftmost, then topmost).

\begin{quote}
\begin{quote}
{\sc Make-\Xtile-Det2048} \\
Instance: 
An $m\times n$ board with an initial configuration of tiles, 
each of which has for value a power of 2, 
and a number $\X$, where $\X =2^c$ with some constant $c$. 
At the end of every turn, a new tile \two appears 
in the first empty cell in lexicographic order. \\
Question: 
Can one make \Xtile from the given configuration by a sequence moves 
(up, down, left and right)?
\end{quote}
\end{quote}

However in the original game, both tiles \two and \four can appear,
at a location determined by the game. Since we consider an offline model, the value
and location of the new tiles should be provided in the input. But
while encoding the value of the tile is easy,  its
location does not have such a natural representation, 
because the new tile can only be
inserted in an empty cell of the board, and the set of empty cells
depends on the previous moves in the game. 
One could conceive several
reasonable encodings (e.g., for each new tile, coordinates $(x,y)$
such that the new tile should be placed in the
closest/lexicagraphically first empty cell from cell $(x,y)$). 
To make our results as general as possible, we just assume
the \emph{location} information is encoded in constant space, and
the game uses that information to place the new tile.

\begin{quote}
\begin{quote}
{\sc Make-\Xtile} \\
Instance: 
An $m\times n$ board with an initial configuration of tiles, 
each of which has for value a power of 2, 
a number $\X$, where $\X =2^c$ with some constant $c$, and the sequence of
values (\two or \four) and location of the new tiles to be placed by
the game at the end of every turn. \\
Question: 
Can one make \Xtile from the given configuration by a sequence moves 
(up, down, left and right)?
\end{quote}
\end{quote}
In this setting, the original game 2048 is 
as {\sc Make-\Xtile} with $m=n=4$ and $\X=2048 = 2^{11}$ $(c=11)$. 

It will be useful to denote some of the variants of {\sc Make-\Xtile} 
by appending qualifiers to their name. For example, in the variant 
{\sc Make-\Xtile only-\two}, only tiles \two appear after each
move. In the {\sc Deterministic} variant, new tiles always appear in
the lexicographically first empty cell, and so 
{\sc Make-\Xtile only-\two Deterministic} is exactly {\sc Make-\Xtile-Det2048}.

For the purpose of analyzing the complexity of
the game, one might argue that the random nature of the original game
might make the game more (or less) tractable. 
To up the ante, some variants of the game, 
such as \emph{Evil2048}~\cite{evil2048}, 
use a heuristic to guess the worst possible location and value for the new
tile at the end of every move. On the other hand, for our hardness proofs it
might make sense to define a {\sc Make-\Xtile-Angel} version, where
the player can decide the value and location of the new tile
after every move. However, as we will see, none of this makes the game
significantly easier, as our \NP-hardness proofs and inapproximability
results hold for all variants mentioned, including 
{\sc Angel} version.

We will also define optimization problems {\sc Max-Tile}, 
{\sc Max-Score}, and {\sc Max-Moves}, whose objective is to maximize
the value of the maximum tile created in the game, the total score of
the game, defined as the sum of the values of all tiles created by
merges, and the number of moves played before losing the game
(reaching a forbidden configuration). These will be discussed in more
detail in the section on inapproximabililty.

Finally, variants using different merging and movement rules, such as
Threes! or Fibonacci will be defined and discussed after the main
\NP-hardness proof.

\section{\NP}\label{sec:np}
In a blog post~\cite{2048inNP} Christopher Chen showed that 2048 is in
\NP. However to simplify the proof, they assume no new piece gets added
to the board after a move. It turns out the proof for the regular
game ({\sc only-\two} version) is not much harder.

The complexity analysis of every problem depends on a reasonable 
representation of the input. We here assume the input is provided in
the form of $b$, the size of the board, and a list $L$ of tiles
present on the board at the beginning of the game. 
Thus the input size is $\log b + |L|$.

\begin{lemma}
For any constant value $\X$, {\sc Make-\Xtile only-\two} is in \NP
\end{lemma}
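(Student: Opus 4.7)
The plan is to follow the standard NP approach: the witness is the sequence of moves played by the player, and verification simulates the game and checks that an $\X$-tile is produced. The main task is to show that any winning sequence can be taken to have length polynomial in the size of the input (which includes the sequence of new tiles to be inserted).

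I plan to establish the bound via a mass-conservation argument. Define the \emph{mass} of a configuration as the sum of its tile values; movements and merges preserve mass, while in the {\sc only-\two} variant every inserted tile contributes mass $2$. The first step is to observe that every valid move leaves at least one empty cell --- on a full board a valid move must include a merge, which frees a cell, and on a non-full board the number of empty cells cannot decrease --- so a \two-tile is inserted after every move, and after $t$ moves the total mass on the board is exactly $M_0 + 2t$, where $M_0$ denotes the initial mass. The second step is to bound this mass in the configuration just before the first $\X$-tile is formed: every tile then has value either (i) at most $\X/2$, or (ii) strictly greater than $\X$. Tiles of type (ii) can only descend from initial tiles of value $\geq 2\X$ (producing a value-$>\X$ tile from smaller ones would require first creating an $\X$-tile, contradicting our assumption, and inserted \two-tiles have value $2$), so their combined mass is at most $M_0$. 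Tiles of type (i) occupy at most $mn$ cells and contribute mass at most $mn\cdot \X/2$. Combining, $M_0 + 2t \leq M_0 + mn\cdot \X/2$, and so $t \leq mn\cdot \X/4 = O(mn)$, polynomial since $\X$ is constant.

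The certificate will then be a list of $O(mn)$ move directions, of polynomial size, and verification simulates the moves one by one using the insertion locations from the input, in $O(mn)$ time per move, accepting as soon as some step creates an $\X$-tile. The main obstacle to anticipate is that initial tiles of value much greater than $\X$ could in principle inflate the mass count; this resolves cleanly because their mass lives inside $M_0$ on both sides of the inequality and cancels out, leaving the polynomial bound on $t$ intact.
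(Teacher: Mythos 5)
Your mass-conservation argument is sound and recovers the same move bound as the paper's first step: the paper likewise observes that the total tile value grows by exactly $2$ per move while staying below $\X b^2/2$ as long as no $\X$-tile has been made, giving at most $\X b^2/4$ moves. Your cancellation of $M_0$ to handle initial tiles of value greater than $\X$ is in fact slightly more careful than the paper's one-line version of this step. The gap is in the final claim that a certificate of $O(mn)$ moves is ``polynomial in the size of the input.'' The paper fixes the input representation immediately before this lemma as $\log b + |L|$, where $L$ is the list of initially occupied cells: the board dimension is encoded in binary, and the insertion sequence is \emph{not} counted (your parenthetical assumption that it is would make the bound nearly circular, since the length of that sequence is exactly the number of turns you are trying to bound). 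Under that representation, a $b\times b$ board with a handful of initial tiles has input size $O(\log b)$, so a witness consisting of $\Theta(\X b^2)$ move directions is exponentially long and your verifier does not run in polynomial time.

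The paper closes exactly this hole with a case split that your proof is missing: if $|L|\geq b$, the input size is at least $b$ and the $\X b^2/4$ bound is indeed polynomial; if $|L|<b$, then by pigeonhole some row is empty, and playing {\rm down} repeatedly keeps inserting \two tiles that accumulate in that row into a single tile of value $2^b\geq\X$ for $b$ large enough, so such instances are unconditional yes-instances and need no long certificate. Without this case analysis (or an explicit restriction to boards whose dimensions are encoded in unary), the polynomiality claim at the end of your proof does not follow.
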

\begin{proof}
If the board is of size $b\times b$, then the maximum total value of
all the tiles on the board without ever reaching $\X$ is $\leq \X
b^2/2$. Since every move adds a tile with value \two, the total number of
moves without reaching $\X$ is $\leq \X b^2/4$. 
If the number of tiles in the starting configuration is $\geq b$, then
so is the input size, and the maximum number of moves reaching $\X$,
that is, the size of any yes certificate, is polynomial in the input size. 
If the number of tiles in the starting configuration is $< b$, then by
the pigeonhole principle there is an empty row. Therefore playing {\rm
  down} repeatedly will eventually, and repeatedly add new \two tiles
in that row which will accumulate to a single tile of value $2^b \geq \X$
for $b$ large enough.\qed
\end{proof}

For the more general version without the {\sc only-\two} restriction, 
it is plausible that a similar strategy would work.
An interesting question is whether the problem is still in \NP \ 
if $\X$ is
not a constant. The difficulty here is that the size of the input
could be as small as $\log \X$, and so number of moves would be
exponential in that. A good first step would be to settle the question
of what is the maximum tile value achievable on a $b\times b$ board in
the game Det2048. For now, the highest value, even on a $4\times 4$
board is unknown, the highest value was found using a heuristic
algorithm~\cite{det2048heuristic}. 

\section{\NP-hardness}\label{sec:np-hardness}

We will show \NP-hardness of {\sc Make-\Xtile} 
by reduction from {\sc 3SAT}. 

\begin{quote}
\begin{quote}
{\sc 3SAT} \\
Instance: 
Set $U$ of variables, collection $C$ of clauses over $U$ 
such that each clause $c\in C$ has $|c|=3$. \\
Question: Is there a satisfying truth assignment for $C$?
\end{quote}
\end{quote}

\begin{theorem}
{\sc Make-\Xtile} is \NP-hard {\rm for any fixed $\X$ greater than 2048}. 
\end{theorem}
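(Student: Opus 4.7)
The plan is to reduce from 3SAT. Given $\varphi$ with variable set $U$ and clause set $C$, the reduction produces an $m\times n$ board, a specific value/location sequence for the new tiles, and the fixed target $\X$, such that $\varphi$ is satisfiable iff a tile of value $\X$ can be produced by some sequence of moves. The board is split into three regions -- a row of $|U|$ \emph{variable gadgets}, a row of $|C|$ \emph{clause gadgets}, and an \emph{assembly area} -- separated by walls built as strictly increasing staircases of distinct power-of-two tiles, so that no two wall tiles can ever merge under any tilt direction. The new-tile sequence provided in the input is chosen so that every \two or \four inserted by the game falls into an inert buffer strip where it cannot reach any gadget.

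The dynamics of play are almost entirely scripted. Initially, a forced sequence of moves (forced because the alternatives are invalid or leave the board in a state from which $\X$ is unreachable) brings each variable gadget to a single decisive turn at which the player must choose between two tilts, say left or right; this choice routes a \emph{literal token} into either the positive or the negative output channel and thereby commits the value of the corresponding variable. Another forced sequence then duplicates each token and delivers copies to the (at most three) clause gadgets containing that literal. A clause gadget is designed as a merge cascade that produces a single high-value output tile $2^{k}$ (for a constant $k$ depending only on $\X$) iff at least one of its three incoming tokens arrives as ``true''; otherwise the cascade stalls and the gadget produces nothing usable. A final scripted phase in the assembly area merges the clause outputs into a single tile of value exactly $\X$. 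Because $\X$ is a fixed constant, only a constant number of clause outputs are actually assembled into the target; remaining outputs are swallowed by a disposal staircase that cannot interact with the $\X$-tile.

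The main obstacle is the global nature of each move: every tilt slides every tile on the board, so the gadgets cannot be treated as independent local modules, and the analysis must rule out any spurious merge inside or across gadgets, regardless of where the game inserts its new tiles. I would handle this with three devices. First, the staircase walls guarantee that gadget boundaries are impervious to merges in any direction. Second, each gadget is designed to be \emph{neutral} under the default move direction of the scripted phases in which it is not actively participating, meaning its internal state is either invariant or shifts by a single cell in a reversible way. Third, a finite case analysis at each turn would verify that any deviation from the scripted sequence is either invalid, neutral, or irrecoverable, so that the only winning plays are those corresponding to satisfying assignments. Once these invariants are in place, correctness follows directly, the reduction is clearly polynomial in $|U|+|C|$, and the construction adapts to the other variants listed in the introduction; the {\sc Angel} version is no easier, since the wall-and-buffer discipline blocks any useful injection of new tiles by the solver.
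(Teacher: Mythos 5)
Your high-level shape (reduction from {\sc 3SAT} with variable, literal-routing, and clause gadgets) matches the paper, but there are two genuine gaps. The first is logical: you propose that each clause gadget emit an output tile $2^k$ and that these outputs be merged into the constant target \Xtile, with ``only a constant number of clause outputs actually assembled'' and the rest discarded. Since $\X$ is a fixed constant and there are $m$ clauses, a value-accumulation scheme can certify only $O(1)$ clauses; a formula with unsatisfied clauses could still supply enough outputs to build \Xtile, so satisfiability is not captured. The paper avoids this entirely: the ``all clauses satisfied'' conjunction is implemented by \emph{sequentially chaining} the clause-checking gadgets, where successfully passing clause $j$ positionally activates clause $j+1$, and the only non-\two merge in the whole game is a single pair of tiles of value $\X/2$ that become adjacent exactly when the last gadget in the chain is passed. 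The signal is adjacency, not tile value, which is why a constant target suffices.

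The second gap is mechanical and is really the central missing idea: your construction relies on static ``staircase walls,'' an ``inert buffer strip'' that absorbs the new tiles, and gadgets that are ``neutral'' under off-phase tilts. But in 2048 every tilt slides every tile as far as possible toward empty space, so any empty buffer region causes neighboring gadget and wall tiles to slide into it and the layout collapses; moreover the game only inserts new tiles into empty cells, whose set depends on the entire move history, so you cannot simply script their destinations. The paper's solution is the \emph{fullness invariant}: the board starts completely full and stays full, with a \emph{one-move invariant} guaranteeing that at any time exactly one pair of adjacent equal tiles exists, so each move frees exactly one boundary cell, the new tile is forced into that cell under any placement rule (which is also what makes the result hold for the {\sc Angel} and {\sc Evil} variants), and every tile is displaced by at most one cell per axis over the whole game. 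The accompanying base pattern $2^{3(i\bmod 3)+(j\bmod 3)+3}$ then rules out all spurious merges by a distance argument rather than by your per-turn case analysis. Without something playing the role of these invariants, your buffer/wall discipline does not survive the global sliding semantics of the game.
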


\proof
Reduction from {\sc 3SAT}. 
From an arbitrary instance of {\sc 3SAT} with $n$ variables and $m$ clauses, 
we construct a {\sc Make-\Xtile} instance.  
The construction will ensure that only tiles of value \two may be merged
(into \four tiles) 
except for two tiles of value $\X/2$ that can be combined to
obtain the target value \Xtile at the end of the game if and only if the {\sc 3SAT}
instance is satisfiable. 

In order to facilitate the analysis of the game, the instances
produced by the reduction will start will all cells of the board
filled with tiles, and maintain this invariant after each move 
(hereafter named \emph{fullness} invariant). 
In order to achieve this, we ensure that at most one pair of
\two tiles is adjacent on the board at all times (\emph{one-move} invariant) and no other pair of
identical tiles ever become adjacent during the game until the very last move. This forces the
player to make a binary choice: {\rm left/right} or {\rm up/down}.
In this manner, at the end of each move, only one cell next to a wall
is freed, and so the game (in every known variant) will have to place
a new tile in that exact cell.

We explain our construction by specifying the locations where tiles \two are placed. 
Since we always construct gadgets by putting tiles \two in pairs, 
we denote this by a pair of two 2D points $(\cdot ,\cdot)$. 
In the subsequent figures, 
they will be represented by black dots on a 2D lattice plane. 
The rest of the board will be filled with a pattern of tiles that will
prevent any accidental merges.

\paragraph{\bf Sketch.}
The construction has three parts: the \emph{variable} gadgets, the
\emph{literal} gadgets, and the \emph{clause checking} gadgets.
We place each variable gadget in a rectangle below the $x$
axis in distinct rows and columns (we use negative $y$ coordinates for
ease of notation). The clauses 
will each take up 12 rows above the $x$ axis, 4 for each literal. Each
literal will lie in 4 of the rows of its clause and 3 of the columns of its
variable. The variable gadgets will cause vertical ({\rm down}) shifts in their columns,
which will be transformed into horizontal ({\rm left}) shifts in the rows of
corresponding clauses by the literal gadgets.
Finally, the clause checking gadgets to the right of the board will
check, for each clause, that at least one of its rows has been shifted.

\paragraph{\bf Base Pattern.} 
We start the construction by filling the board with the
repeating \emph{base pattern} shown in Fig.~\ref{base_pattern}. 
Assuming the bottom left cell is numbered $(0,0)$,
cell $(i,j)$ of the base pattern contains the tile 
$2^{3(i\mod  3)+(j\mod 3)+3}$. See Fig.~\ref{base_pattern}.
\begin{figure}[hbt]
\centering
\scalebox{0.50}{\includegraphics{./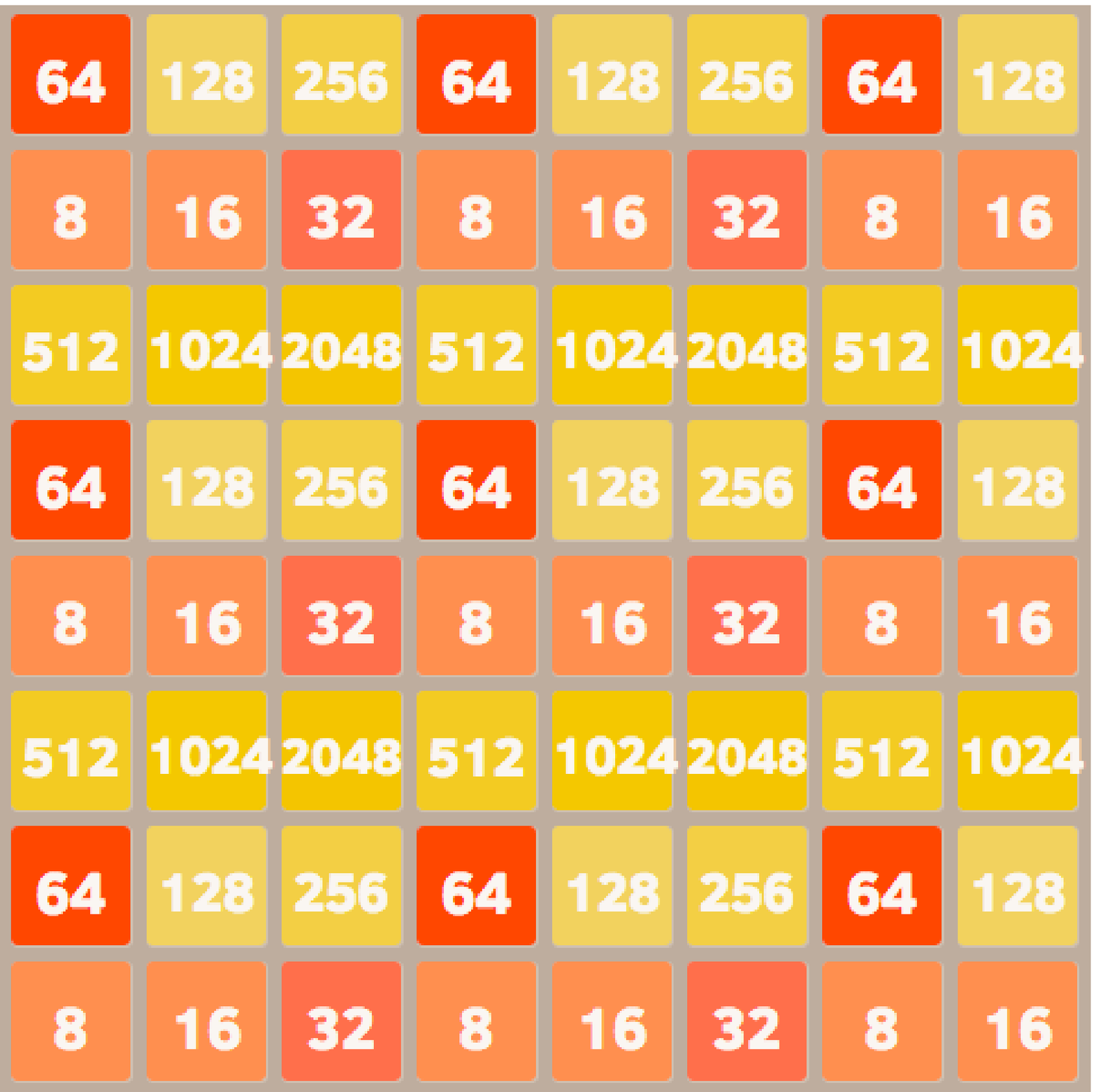}}
\caption{Base pattern.}
\label{base_pattern}
\end{figure}

The gadgets replace some of the cells by the tile \two.
This will cause some shifts in rows and columns during the game as
those tiles become adjacent.
But because of the one-move invariant, at most one row or column
shifts in each move.
In order to avoid accidental merges between tiles of the base pattern,
gadgets will be constructed in such a way that no row or column will
ever be shifted more than once (avoiding unwanted merges between new
tiles appearing in the same cell). 
Second, we will adjust the size of the board so that gadgets are at a
distance at least 3 from the walls, to avoid new tiles to interfere
with the \two tiles of the gadgets.
Furthermore, we will place gadgets in such a way that 
no two adjacent columns
will ever be shifted. Likewise, we will ensure that
no two adjacent rows will ever be shifted,
except in one place in the clause checking
gadget where the one-move invariant will have to be argued more
carefully. 

Ignoring this last case for now, notice that a tile can only be offset
by one position horizontally and one position vertically during the
entire game.
Two tiles of same value are adjacent if the difference between their
coordinates are $(0,1)$ or $(1,0)$, and any identical pair of tiles
that are not on the same row or the same column start with coordinate
difference at least $(3,3)$, and thus can never become adjacent.
If two identical tiles are on the same column, their vertical distance,
starting at $3$, would have to be reduced twice (using two opposite
vertical shifts) in order for them to be at distance $1$. 
However, since adjacent columns cannot be shifted in one game, this
can only happen if they are at horizontal distance $2$ at some point
during the game, but since they started at horizontal distance $0$,
making that happen would already spend the $2$ horizontal moves for
those two tiles, making it impossible to bring them back close enough
to become adjacent. The same argument applies to two identical tiles
on the same row at distance $3$.

Finally, consider the special case occurring in the clause checking
gadget, where two adjacent rows are shifted. A similar case analysis
will show that identical tiles on the same row could become adjacent,
but only if the last move is vertical. A careful inspection of the
gadget will reveal that no vertical move will occur within the shifted
portion of those rows after the adjacent horizontal shift occurs.

\paragraph{\bf Variables.}
For each variable gadget, we reserve 6 rows of the board, and 3
columns for each clause in which that variable appears. Assume, without
loss of generality, that every
variable appears at least once in the positive form (otherwise negate it).

Let $k_i^+$ and $k_i^-$ be the numbers of clauses 
containing $x_i$ and $\overline{x_i}$, respectively. 
We define the offset coordinates for each variable by 
\[
\left\{
\begin{array}{l}
X_0^{\rm V}=0, \\
X_{i}^{\rm V}=X_{i-1}^{\rm V}+3(k_i^+ + k_i^-)+7\ \ (1\le i\le n) 
\end{array}
\right. 
\mbox{ and}
\left.
\begin{array}{l}
Y_i^{\rm V}=-6i\ \ (0\le i\le n-1). 
\end{array}
\right. 
\]
Now for each variable $x_i$ $(i=1,\ldots ,n)$ 
we construct a gadget as follows. 
For a choice of true or false, we put a pair of tiles \two 
at $$(A^{\rm V}_i,B^{\rm V}_i) = \left( (X_{i-1}^{\rm V}+1, Y_{i-1}^{\rm V}+1), (X_{i-1}^{\rm V}+2, Y_{i-1}^{\rm V})\right).$$ 
For the false part we place two pairs of tiles \two at coordinates 
$$(C^{\rm V}_i,D^{\rm V}_i) = \left((X_{i-1}^{\rm V}+2, Y_{i-1}^{\rm V}-2), (X_{i-1}^{\rm V}+1, Y_{i-1}^{\rm V}-3)\right),$$ 
and
$$(E^{\rm V}_i,F^{\rm V}_i) = \left((X_{i-1}^{\rm V}+3k_i^+ +5, Y_{i-1}^{\rm V}-2), (X_{i-1}^{\rm V}+3k_i^+ +4, Y_{i-1}^{\rm V}-3)\right),$$
\begin{figure}[hbt]
\centering
\scalebox{1.15}{\input{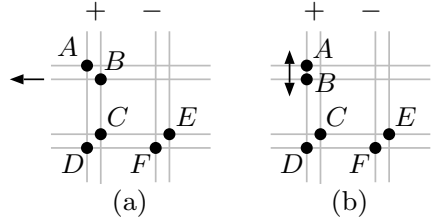}}
\caption{Variable gadget.}
\label{variable_gadget}
\end{figure}

So in the variable gadget for $x_i$, six tiles of value \two are placed 
as shown in Fig.~\ref{variable_gadget}(a) in general.
The gadget is \emph{activated} by pulling the row of $B_i$ left (i.e.,
by merging a pair of adjacent \two on the row of $B_i$ to the left).   
Now (see Fig.~\ref{variable_gadget}(b)), the tiles $A_i$ and $B_i$
become adjacent and on the same column.
At this point the player has the choice to move this column containing
$A_i$ {\rm down} (positive assignment to $x_i$ and to make each clause 
containing literal $x_i$ true), 
or {\rm up} (negative assignment to make each clause containing literal 
$\overline{x_i}$ true).
If the player chooses to move {\rm up}, then $D_i$ moves up one cell and
becomes adjacent to $C_i$, allowing the player to move {\rm left},
causing $E_i$ to move one cell left. 
The tile $E_i$ is now adjacent and on the same column as $F_i$ and the
column of $F_i$ can be moved down. Because there are no \two tiles
below the variable gadget or to its left, any sequence of moves other
than this one will cause the game to end.

Thus the $x_i$ variable gadget has for effect to move down either the column
of $A_i$ (true), or that of $F_i$ (false). This choice will be
propagated through each corresponding literal gadgets.

\paragraph{\bf Literals.} 
For literals in clauses we introduce the following coordinates: 
\[
\left\{
\begin{array}{l}
X_i^{\rm L}=X_i^{\rm V}\ \ (0\le i\le n), \\
X_{n+j}^{\rm L}=X_{n+j-1}^{\rm L}+25\ \ (1\le j\le m) 
\end{array}
\right. 
\mbox{ and}
\left.
\begin{array}{l}
Y_j^{\rm L}=12(j-1)+4\ \ (1\le j\le m). 
\end{array}
\right. 
\]
For variable $x_i$, 
suppose its positive literals $x_i$ appear 
in the $p_k$-th position of the $j_k$-th clause 
($p_k\in \{0,1,2\}$; $k=1,\ldots ,k_i^+$; $1\le j_1<\cdots <j_{k_i^+}\le m$). 
Remember that setting $x_i$ to true will shift column $X^V_{i-1}+1$
{\rm down}. The gadget for the first positive literal $x_i$ will
receiving this vertical activation, shift one of its rows left  
and propagate the {\rm down} move to activate the next literal.  
For this, we put two pairs of tiles \two at 
$$(A^{\rm L}_{j_k,p_k},B^{\rm L}_{j_k,p_k}) = \left((X_{i-1}^{\rm L}+3(k-1)+1, Y_{j_k}^{\rm L}+4p_k +1),
(X_{i-1}^{\rm L}+3(k-1)+2, Y_{j_k}^{\rm L}+4p_k)\right)$$ and 
$$(C^{\rm L}_{j_k,p_k},D^{\rm L}_{j_k,p_k}) = \left((X_{i-1}^{\rm L}+3(k-1)+4, Y_{j_k}^{\rm L}+4p_k -1), 
(X_{i-1}^{\rm L}+3(k-1)+5, Y_{j_k}^{\rm L}+4p_k)\right).$$ 

Likewise, when $k_i^- > 0$, negative literals $\overline{x_i}$ appear 
in the $p_k$-th position of the $j_k$-th clause 
($p_k\in \{0,1,2\}$; $k=1,\ldots, k_i^-$; $1\le j_1<\cdots <j_{k_i^-}\le m$). 
For receiving and propagating vertical activations 
we put two pairs of tiles \two at 
$$(A^{\rm L}_{j_k,p_k},B^{\rm L}_{j_k,p_k}) = \left((X_{i-1}^{\rm L}+3(k_i^+ +k)+1, Y_{j_k}^{\rm L}+4p_k +1), 
(X_{i-1}^{\rm L}+3(k_i^+ +k)+2, Y_{j_k}^{\rm L}+4p_k)\right)$$ and 
$$(C^{\rm L}_{j_k,p_k},D^{\rm L}_{j_k,p_k}) = \left((X_{i-1}^{\rm L}+3(k_i^+ +k)+4, Y_{j_k}^{\rm L}+4p_k -1), 
(X_{i-1}^{\rm L}+3(k_i^+ +k)+5, Y_{j_k}^{\rm L}+4p_k)\right).$$ 
See Fig.~\ref{literal_gadget}(a).
\begin{figure}[hbt]
\centering
\scalebox{1.15}{\input{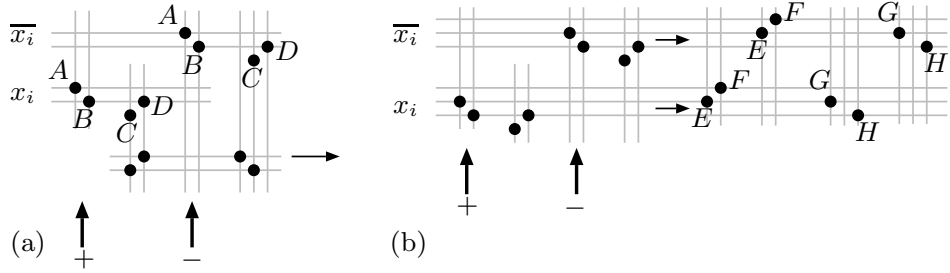}}
\caption{Literal gadget.}
\label{literal_gadget}
\end{figure}

The horizontal ({\rm right}) shifts for both positive and negative literals has for effect
to move \two tiles placed to the right of the board, for use in the
clause checking gadgets. We add those pairs of tiles \two at 
$$(E^{\rm L}_{j_k,p_k},F^{\rm L}_{j_k,p_k}) = \left((X_{n+j_k-1}^{\rm L}+6p_k+1, Y_{j_k}^{\rm L}+4p_k +1), 
(X_{n+j_k-1}^{\rm L}+6p_k+2, Y_{j_k}^{\rm L}+4p_k+2)\right)$$ and 
$$(G^{\rm L}_{j_k,p_k},H^{\rm L}_{j_k,p_k}) = \left((X_{n+j_k-1}^{\rm L}+3p_k+16, Y_{j_k}^{\rm L}+4p_k +1), 
(X_{n+j_k-1}^{\rm L}+3p_k+18, Y_{j_k}^{\rm L}+4p_k)\right).$$ 
See Fig.~\ref{literal_gadget}(b).

The final appearance of literals $x_i$ and $\overline{x_i}$ will also
be represented by two pairs of tiles like above, but the second pair
will cause a vertical shift {\rm up} which will be propagated to
activate the next variable gadget 
or to activate the clause checking gadgets 
as shown in Fig.~\ref{literal_gadget}(b). This process is described next.

\paragraph{\bf Activate.}
The first variable gadget is activated by a pair of \two placed
at 
$$\left((-3,0),(-2,0)\right)$$
causing a horizontal shift {\rm left} for $B^{\rm V}_1$. For subsequent
variables, assigning the truth value to the final literal $x_i$ or
$\overline{x_i}$ ($1\leq i\leq n-1$) will cause a vertical shift {\rm up} at column 
$X_{i-1}^{\rm L}+3(k_i^+-1)+4$ or 
$X_{i-1}^{\rm L}+3(k_i^++k_i^-)+4$.
Note that if $k_i^-=0$, then it is $E^V_i$ and $F^V_i$ which will
cause the vertical shift at that position.
We propagate that shift into a horizontal {\rm left} shift activating
variable $x_{i+1}$ using two pairs of tiles \two at 
$$\left((X_{i-1}^{\rm L}+3(k_i^+-1)+4, Y_{i-1}^{\rm V}-7), 
((X_{i-1}^{\rm L}+3(k_i^+-1)+5, Y_{i-1}^{\rm V}-6)\right)$$ and 
$$\left((X_{i-1}^{\rm L}+3(k_i^++k_i^-)+3, Y_{i-1}^{\rm V}-6), 
(X_{i-1}^{\rm L}+3(k_i^++k_i^-)+4, Y_{i-1}^{\rm V}-7)\right).$$
See the bottom four tiles of Fig.~\ref{literal_gadget}(a).

After the truth assignments of the literals of the last variable $x_n$ or
$\overline{x_n}$, one of the same columns is shifted, but this time
{\rm down} and that shift is propagated to activate the clause
checking gadgets using 
two pairs of tiles \two at 
$$\left((X_{n-1}^{\rm L}+3(k_n^+-1)+4, 12m+5), 
(X_{n-1}^{\rm L}+3(k_n^+-1)+5, 12m+4)\right)$$ and 
$$\left((X_{n-1}^{\rm L}+3(k_n^++k_n^-)+3, 12m+4), 
(X_{n-1}^{\rm L}+3(k_n^++k_n^-)+4, 12m+5)\right).$$

\paragraph{\bf Checking Clauses.} 
For clause checking gadgets
we take coordinates as follows: 
\[
\left.
\begin{array}{l}
X_{j}^{\rm T}=X_{n+j}^{\rm L}\ \ (0\le j\le m) 
\end{array}
\right. 
\mbox{and }
\left\{
\begin{array}{l}
Y_1^{\rm T}=12m+12, \\
Y_{j}^{\rm T}=Y_{j-1}^{\rm T}+15\ \ (1< j\le m). 
\end{array}
\right. 
\]
For each clause $C_j$ $(j=1,\ldots ,m)$ the corresponding gadget has for purpose to 
check that at least one literal of that clause has been set to
true.  
To choose which of the three literals will be checked, we  
we place the following five pairs of tiles \two at 
$$(A^{\rm T}_{j,1}, B^{\rm T}_{j,1}) = \left((X_{j-1}^{\rm T}+17, Y_j^{\rm T}-7), (X_{j-1}^{\rm T}+18, Y_j^{\rm T}-8)\right),$$ 
$$\left((X_{j-1}^{\rm T}+17, Y_j^{\rm T}+2), (X_{j-1}^{\rm T}+18, Y_j^{\rm T}+1)\right),$$ 
$$(A^{\rm T}_{j,2}, B^{\rm T}_{j,2}) = \left((X_{j-1}^{\rm T}+20, Y_j^{\rm T}+2), (X_{j-1}^{\rm T}+21, Y_j^{\rm T}+1)\right),$$ 
$$\left((X_{j-1}^{\rm T}+20, Y_j^{\rm T}+5), (X_{j-1}^{\rm T}+21, Y_j^{\rm T}+4)\right),$$
$$(A^{\rm T}_{j,3}, B^{\rm T}_{j,3}) = \left((X_{j-1}^{\rm T}+23, Y_j^{\rm T}+5), (X_{j-1}^{\rm T}+24, Y_j^{\rm T}+4)\right).$$ 
\begin{figure}[hbt]
\centering
\scalebox{1.15}{\input{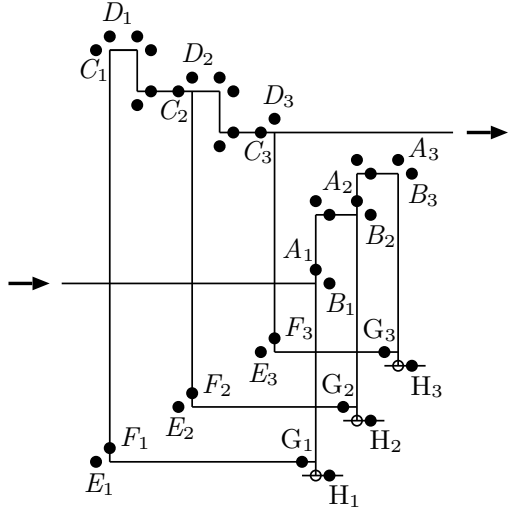}}
\caption{Clause checking gadget.}
\label{clause_test}
\end{figure}

The pair $(A^{\rm T}_{j,1},B^{\rm T}_{j,1})$ is activated by shifting the row $Y_j^{\rm T}-8$ 
{\rm left}. Then either the move {\rm up} shifts the column
$X_{j-1}^{\rm T}+17$ up, or the sequence {\rm down, left, up} shifts
the column $X_{j-1}^{\rm T}+20$ up, or the sequence
{\rm down, left, down, left, up} shifts column $X_{j-1}^{\rm T}+23$
up. Any other sequence of moves ends the game. 
Note that the column of $A^{\rm T}_{j,p}$, $p=1,2,$ or $3$ being
shifted up is exactly one column left of the one containing the \two at
$H^{\rm L}_{j,p}$ at the beginning of the game. If the corresponding literal
was set to true in the literal gadget, then the \two had been shifted
left and is now shifted up, bringing $G^{\rm L}_{j,p}$ and $H^{\rm L}_{j,p}$ next
to each other. The row of $G^{\rm L}_{j,p}$ can now be shifted {\rm left},
activating the pair $(E^{\rm L}_{j,p},F^{\rm L}_{j,p})$, and the column of
$F^{\rm L}_{j,p}$ can now be shifted {\rm down}.

To collect the {\rm down} shift in the column of $F^{\rm L}_{j,p}$ for the chosen
$p=1,2,$ or $3$, we place seven pairs of tiles \two at coordinates 
$$(C^{\rm T}_{j,1},D^{\rm T}_{j,1}) = \left((X_{j-1}^{\rm T}+1, Y_j^{\rm T}+13), (X_{j-1}^{\rm T}+2, Y_j^{\rm T}+14)\right),$$ 
$$\left((X_{j-1}^{\rm T}+4, Y_j^{\rm T}+14), (X_{j-1}^{\rm T}+5, Y_j^{\rm T}+13)\right),$$ 
$$\left((X_{j-1}^{\rm T}+4, Y_j^{\rm T}+9), (X_{j-1}^{\rm T}+5, Y_j^{\rm T}+10)\right),$$ 
$$(C^{\rm T}_{j,2},D^{\rm T}_{j,2}) = \left((X_{j-1}^{\rm T}+7, Y_j^{\rm T}+10), (X_{j-1}^{\rm T}+8, Y_j^{\rm T}+11)\right),$$ 
$$\left((X_{j-1}^{\rm T}+10, Y_j^{\rm T}+11), (X_{j-1}^{\rm T}+11, Y_j^{\rm T}+10)\right),$$ 
$$\left((X_{j-1}^{\rm T}+10, Y_j^{\rm T}+6), (X_{j-1}^{\rm T}+11, Y_j^{\rm T}+7)\right),$$ 
$$(C^{\rm T}_{j,3},D^{\rm T}_{j,3}) = \left((X_{j-1}^{\rm T}+13, Y_j^{\rm T}+7), (X_{j-1}^{\rm T}+14, Y_j^{\rm T}+8)\right).$$ 

Now the vertical shift of column $F^{\rm L}_{j,p}$ aligns the \two tiles of
$C^{\rm T}_{j,p}$ and $D^{\rm T}_{j,p}$, and the rest of the \two can be used to
propagate the horizontal shift until the row of $C^{\rm T}_{j,3}$ is shifted
left. This is the same row as $B^{\rm T}_{j+1,1}$ and so activates the next
clause checking gadget for $j<m$.

\paragraph{\bf Goal.} To make a target number $X$ ($> 2048$), 
we place a pair of tiles of value $X/2$ at 
$\left((X_{n+m}^{\rm T}+1, Y_m^{\rm T}+8), (X_{n+m}^{\rm T}+2, Y_m^{\rm T}+7)\right)$. 
This pair will become adjacent when the last clause checking gadget is
successfully played and shifts the row of $C^{\rm T}_{m,3}$ {\rm left}.

From the construction, it is clear that the tile \Xtile can be created
if and only if given {\sc 3SAT} formula 
is satisfiable. 
The size of the board is $\Theta((n+m)^2)$, and the size of the
sequence of new tiles is $\Theta(m+n)$.
So this reduction takes polynomial space and polynomial time 
with respect to the input size $n+m$ of the {\sc 3SAT} instance. 
\qed

\bigskip
We illustrate a complete example of our reduction  
in Fig.~\ref{reduction}, where the formula for {\sc 3SAT} is 
$f=(x_1\vee \overline{x_2}\vee x_3)\wedge (x_2\vee x_3\vee \overline{x_4})
\wedge (\overline{x_1}\vee \overline{x_2}\vee x_4)$. 
In this figure, two goal tiles $\X/2$ are represented by squares.

\begin{figure}[hbt]
\centering
\scalebox{0.77}{\input{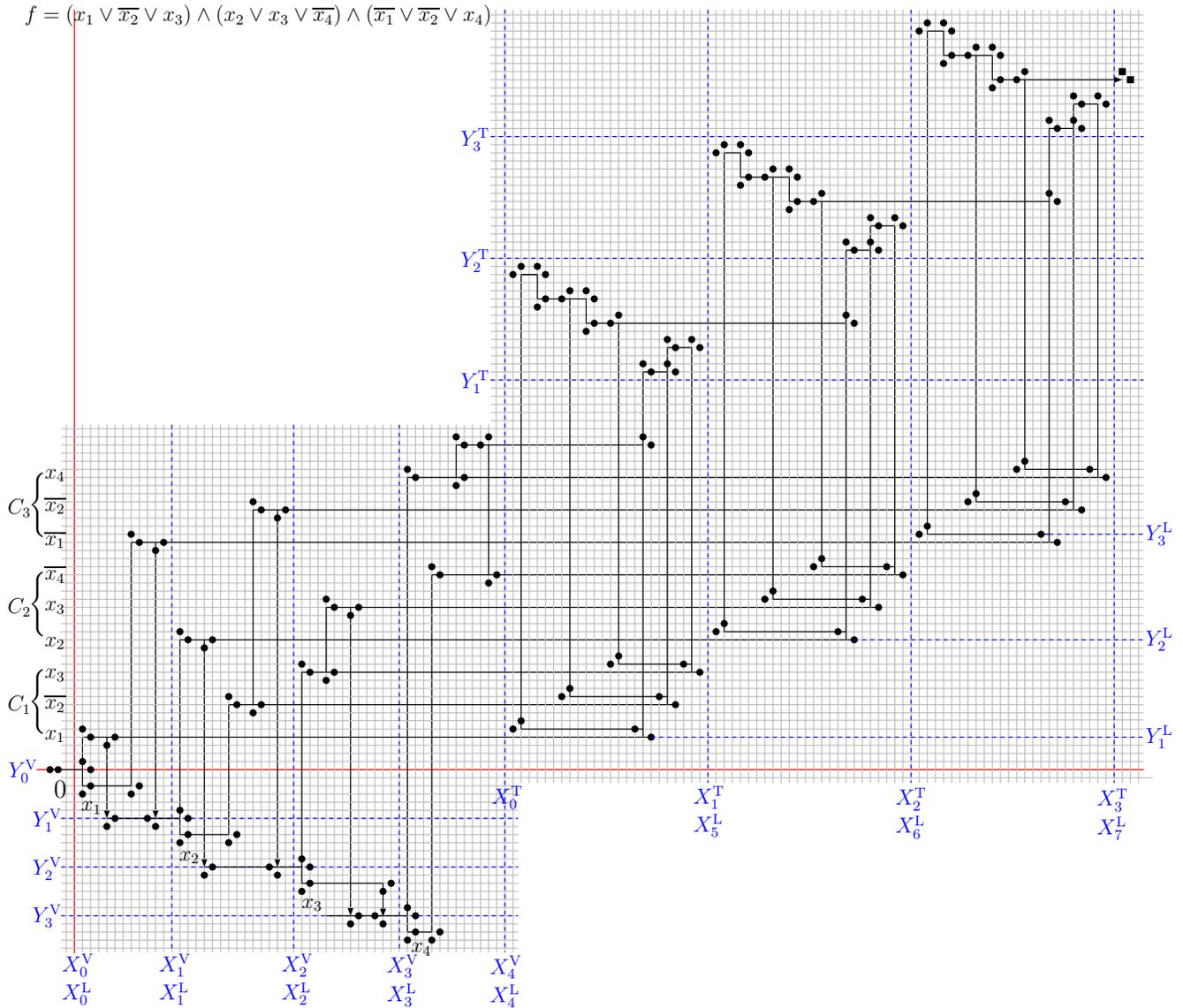}}
\caption{An example of \NP-hardness reduction from {\sc 3SAT} to {\sc Make-\Xtile}.}
\label{reduction}
\end{figure}

\section{Inapproximability}\label{sec:approximability}

It is fairly easy to extend the construction from the previous section
to show it is \NP-hard to approximate {\sc Max-Tile}, {\sc Max-Score} or
{\sc Max-Moves}. For {\sc Max-Tile} and {\sc Max-Score}, it would be
enough to change the value of the goal pair of tiles to an arbitrarily
high number. However one might want to impose that the tiles of the
input configuration be all of small value. In that case, we can still
show inapproximability by using the \emph{pot of gold} technique. 

Note that in the previous construction, if the formula is satisfiable,
then the goal tiles will be adjacent in column $X^{\rm T}_{n+m}+1$. We add
tiles:

$$(A^{\rm A},B^{\rm A}) = \left((X^{\rm T}_{n+m}+1, Y^{\rm T}_m+21),(X^{\rm T}_{n+m}+2, Y^{\rm T}_m+20)\right)$$

\noindent
We then extend the board to the left of the first variable gadget by
$K=2^p$ columns, and place tiles of alternating values \tileVIII and
\tileXVI on row $Y^{\rm T}_m+20$ at negative $x$ coordinates. On the row
$Y^{\rm T}_m+19$, we place tiles of alternating values \tileXXXII and
\tileVIII (so the tiles \tileVIII are just below the tiles \tileXVI
and can't merge. 

If the formula is satisfiable (and only then), the player can solve
the game as before until she shifts column $X^{\rm T}_{n+m}+1$ {\rm down}. 
One can then shift the row of $B^{\rm A}$ {\rm left} which aligns all the
\tileVIII of that row with the \tileVIII of the row below. A move {\rm
  up} now merges all those {\tileVIII}s into {\tileXVI}s, and
repeatedly shifting right $p=\log K$ times will merge all those tiles
into one tile of value $16K$. We can then continue the sequence with
$S=2^q$ \two appearing in the leftmost cell of row $Y^{\rm T}_m+20$, with $q<K$.
The total score is then $\Theta(m+n+K+S)$ and the maximum tile is
$\Theta(\max(K,S))$.

The input size in this game is
the entire size of the board plus the length of the tile
sequence and all tiles are of constant value. 
The original board size is $\Theta((n+m)^2)$ so the augmented board
is of size $\Theta((n+m)(n+m+K))$. The number
of moves is $\Theta(n+m+\log K+S)$.
 
So in the standard game:

\begin{itemize}
\item Taking $K=n+m$, $S=K^3$, the input
  size is $N=K^3+\Theta(K^2)$. The maximum tile value is $S=N-O(N^{2/3})$ if the formula is satisfiable, 2048
otherwise. So, it is \NP-hard to approximate {\sc Max-Tile} within a factor
$N/c$ for some constant $c$.
\item Taking $K=n+m$, $S=2^K$, the input
  size is $N=2^K+\Theta(K^2)$. The maximum score is $S=N-O(\log^{2}N)$
  if the formula is satisfiable, $O(K)=O(\log N)$ otherwise. So it is
  \NP-hard to approximate {\sc Max-Score} within a factor 
  $o(N/\log N)$.
\item Using the same parameters as above, the maximum number of moves
  is at least $S=N-O(\log^{2}N)$ if the formula is satisfiable,
  $O(K)=O(\log N)$ otherwise. So it is
  \NP-hard to approximate {\sc Max-Moves} within a factor 
  $o(N/\log N)$.
\end{itemize}

Note the importance of $S$ in the input size $N$ for the standard version
of the game. In the game {\sc Det2048}, however, the sequence of new
tiles is implicit and not part of the input. The inapproximability
results are then strengthened: all three problems are inapproximable
within a factor $o(2^N)$ or $o(2^N/N)$.

\section{Variants}\label{sec:variants}
Only minor modifications are required to make the \NP-hardness
reduction work for most known variants of Threes! and 2048. We just
describe them for the original game Threes!, and for the Fibonacci
version mentioned in the introduction. The extension of these results
to other variants such as Fives, 1024! and Saming's 2048 are immediate
and are left as an exercise to the reader.

\subsection{Threes!}
The reduction for Threes! is nearly identical as for 2048. The easiest
way to repeat the proof would be to replace every tile of value $2^a$
by a tile of value $3 \cdot 2^{a-1}$. A slightly better bound can be
obtained in the following manner. 
Every occurence of tile \two is replaced by a \threethrees. The base pattern uses
only tiles \onethrees, and the new tiles added after each move are
\onethrees. Since a \onethrees can only merge with a \twothrees, this will ensure the
base pattern never causes an unwanted merge. The goal tiles are
replaced by two \sixthrees.
 
Recall that tiles in Threes! move according to slightly different
rules (most importantly, every tile stays in place, shifts to an
adjacent cell or merges with an adjacent tile in every move). 
However, because of the fullness and one-move invariants, the result
of a move will be the same in Threes! as it was for 2048.

Therefore, an identical proof shows that it is \NP-hard to decide if it is
possible to achieve tile \twelvethrees in Threes!. 
The inapproximability results for {\sc Max-Tile} and {\sc Max-Moves}
extend as well. For {\sc Max-Score}, the situation is even worse, as
following the same reduction ending it with a sequence of $S=2^q$
tiles \threethrees, we would produce a tile of value $3\cdot 2^q$
which will produce a score of $3^{q+1} = 3S^{\log_2 3} =
\Omega(N^{\log_2 3}$ if the formula
is satisfiable, and $O(K)=O(\log N)$ otherwise. Therefore, it
is \NP-hard to approximate {\sc Max-Score} in Threes! within a factor 
  $o(N^{\log_2 3}/\log N)$.

\subsection{Fibonacci}
Denote the $i$-th Fibonacci number by $F_i$, 
that is, $F_1=F_2=1$, and $F_{i+2}=F_{i+1}+F_{i}$. 
In the Fibonacci version, tiles merge only if they are adjacent in the
Fibonacci sequence.

We modify the reduction so that every occurence of tile \two is
replaced by a 1 (since $F_1=F_2=1$, they can merge into a 2 when
adjacent). 
The base pattern uses only tiles of value 5, and the new tiles added after each move are
1. Since a 5 can only merge with a 3 or 8, this will ensure the
base pattern never causes an unwanted merge. The goal tiles are
replaced by 13 and 21. Therefore is \NP-hard to decide if it is
possible to achieve tile 34. Inapproximability results extend as well.


\bibliographystyle{plain}
\bibliography{2048}

\end{document}